\newtheorem{theorem}{Theorem}[section]
\newtheorem{lemma}[theorem]{Lemma}
\newtheorem{proposition}[theorem]{Proposition}
\theoremstyle{definition}
\newtheorem{criterion}[theorem]{Criterion}
\newtheorem{example}[theorem]{Example}
\newtheorem{remark}[theorem]{Remark}
\renewenvironment{proof}[1][Proof]{\noindent\textbf{#1.} }{\hfill \rule{0.5em}{0.5em}}
\begin{document}

\title{Non-traded call's volatility smiles}
\author{Marek Capinski}
\date{\today}
\maketitle

\begin{abstract}
Real life hedging in the Black-Scholes model must be imperfect and if the stock's drift is higher than the risk free rate, leads to a profit on average. Hence the option price is examined as a fair game agreement between the parties, based on expected payoffs and a simple measure of risk. The resulting prices result in the volatility smile. 
\end{abstract}

\section{Introduction}

We consider a European call option written on stock satisfying the
Black-Scholes equation%
\begin{equation}
dS(t)=\mu S(t)dt+\sigma S(t)dW_{P}(t),  \label{eq:BS-equ}
\end{equation}%
where $W_{P}(t)$ is a Wiener process in the probability space $(\Omega ,%
\mathcal{F},P)$. The strike price is $K$ and exercise time $T$, and the
option payoff is denoted $C(T)=(S(T)-K)^{+}.$

The option is sold over the counter for the price $C$. This option is
assumed to be non-tradable, so the arbitrage pricing argument does not
apply and the price will the result of an agreement between the writer
(seller) and the holder (buyer). The buyer is an investor who strongly
believes that the stock will perform well and is ready to invest in a call,
which gives certain leverage. In particular, he believes that $\mu >r,$
where $r$ is the risk-free rate, and we make this assumption throughout. 

The writer will hedge by taking a position in the primary market consisting
of stock $S$ and the money market account $A(t)=e^{rt}.$ The model is
complete, but this requires continuous rebalancing, which is impossible. So
the initial position will be taken for a period of time, suppose first this
is kept constant till exercise, with $x$ shares. As a result, the expected
writer's payoff, computed with respect to the physical probability, which
means the stock price with $\mu $ is used, is larger than the expected
option payoff (since the stock on average grows faster than the risk free
asset). The holder may require the initial option price be lower, so that
the expected payoffs are the same. However, this equilibrium price depends
on $x,$ so one more condition is needed. A natural assumption is that
writer's hedging risk is minimised, which requires choosing a risk measure.
The resulting $x$ is not binding and only serves the purpose of negotiating
the option price.

At any time the writer may change the hedging position and the same analysis
(based on the assumption that the new position hypothetically stays static)
will give the current value of the option.

It is interesting to see that the implied volatility (that is, the $\sigma $
which would result in the Black-Scholes option price being the same as the
equilibrium price) shows the volatility smile effect.

A similar strategy is discussed in \cite{CJP}, where imperfect hedging leads
to the prices found by means of the physical probability but the method is
different (it involves utility functions). Various risk measures are used
for imperfect hedging in \cite{ST} and \cite{WZS}. A classical example of a
non-traded option are employee's options, but the key feature is concerned
with the exercise time limitations, so our approach does not cover this. 

\section{Pricing by the Black-Scholes formula}

Despite the fact that the price is not based on the no-arbitrage principle,
it is a natural idea to use the Black-Scholes formula first.

Let $r$ be the risk-free rate and introduce the risk-neutral probability $Q,$
where $W_{Q}(t)=W(t)+\frac{\mu -r}{\sigma }t$ is a Wiener process under $Q.$
The stock prices follow the equation%
\begin{equation*}
dS(t)=rS(t)dt+\sigma S(t)dW_{Q}(t)
\end{equation*}%
and the no-arbitrage call option price is given by the Black-Scholes formula 
\begin{equation*}
C_{\mathrm{BS}}=S(0)N(d_{+}^{r})-e^{-rT}KN(d_{-}^{r}),
\end{equation*}%
where%
\begin{equation*}
d_{\pm }^{r}=\frac{\ln \frac{S(0)}{K}+rT\pm \frac{1}{2}\sigma ^{2}T}{\sigma 
\sqrt{T}}.
\end{equation*}

Suppose that $C=C_{\mathrm{BS}}.$ At time $T$ the option holder's profit is 
\begin{equation*}
P_{\mathrm{H}}(T)=C(T)-C_{\mathrm{BS}}e^{rT}.
\end{equation*}%
The expectation of $P_{H}(T)$ with respect to risk-neutral probability is of
course zero since $C_{\mathrm{BS}}=e^{-rT}\mathbb{E}_{Q}(C(T))$. However,
the risk-neutral world is abstract and in reality the stock follows equation
(\ref{eq:BS-equ}) under the physical probability $P$, so the expectation
with respect to this measure is relevant:%
\begin{equation*}
\mathbb{E}_{P}(P_{\mathrm{H}}(T))=\mathbb{E}_{P}(C(T))-C_{\mathrm{BS}}e^{rT}.
\end{equation*}

\begin{proposition}
If $\mu >r,$ then $\mathbb{E}_{P}(P_{\mathrm{H}}(T))>0.$
\end{proposition}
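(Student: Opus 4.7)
The goal is to show $\mathbb{E}_P(C(T)) > C_{\mathrm{BS}} e^{rT}$. The key observation is the risk-neutral pricing representation $C_{\mathrm{BS}} e^{rT} = \mathbb{E}_Q((S(T)-K)^+)$, which is already noted in the text. So the inequality to establish reduces to $\mathbb{E}_P((S(T)-K)^+) > \mathbb{E}_Q((S(T)-K)^+)$, i.e.\ the expected call payoff with drift $\mu$ strictly exceeds the expected payoff with drift $r$ (same volatility, same initial value).

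I would prove this by a coupling. Using a single Wiener process $W$ on $(\Omega,\mathcal{F},P)$, define
\begin{equation*}
S^{(\mu)}(T) = S(0)\exp\!\bigl((\mu-\tfrac12\sigma^2)T + \sigma W(T)\bigr),\qquad S^{(r)}(T) = S(0)\exp\!\bigl((r-\tfrac12\sigma^2)T + \sigma W(T)\bigr).
\end{equation*}
Then $S^{(\mu)}(T) = e^{(\mu-r)T} S^{(r)}(T)$, so $S^{(\mu)}(T) > S^{(r)}(T)$ pointwise because $\mu>r$. Since $x\mapsto (x-K)^+$ is nondecreasing, the pointwise inequality $(S^{(\mu)}(T)-K)^+ \ge (S^{(r)}(T)-K)^+$ follows. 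The law of $S^{(\mu)}(T)$ under $P$ equals the $P$-law of $S(T)$, while the law of $S^{(r)}(T)$ under $P$ equals the $Q$-law of $S(T)$ (both are lognormal with the prescribed drift and volatility), so taking expectations yields the desired inequality, weakly.

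The one step needing care, and the main obstacle, is upgrading weak to strict inequality. On the event $\{S^{(\mu)}(T)>K\}$ the difference $(S^{(\mu)}(T)-K)^+-(S^{(r)}(T)-K)^+$ is strictly positive: either $S^{(r)}(T)\le K$ and the difference equals $S^{(\mu)}(T)-K>0$, or $S^{(r)}(T)>K$ and the difference equals $(e^{(\mu-r)T}-1)S^{(r)}(T)>0$. Because $S^{(\mu)}(T)$ is lognormal with full support on $(0,\infty)$, the event $\{S^{(\mu)}(T)>K\}$ has positive $P$-probability, so the inequality between expectations is strict.

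As a sanity check I would also sketch the analytic route: writing $f(\mu)=\mathbb{E}_P((S(T)-K)^+)=S(0)e^{\mu T}N(d_+^\mu)-KN(d_-^\mu)$ and using the standard identity $S(0)e^{\mu T}n(d_+^\mu)=Kn(d_-^\mu)$ (where $n$ is the standard normal density), the density terms cancel and one obtains $f'(\mu)=S(0)T e^{\mu T}N(d_+^\mu)>0$, so $f$ is strictly increasing in the drift and in particular $f(\mu)>f(r)=C_{\mathrm{BS}}e^{rT}$.
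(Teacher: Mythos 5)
Your proposal is correct and takes essentially the same route as the paper: both arguments reduce the claim to comparing the expected call payoff under drifts $\mu$ and $r$ by identifying the law of $W_{Q}(T)$ under $Q$ with that of $W_{P}(T)$ under $P$. Your write-up is in fact more complete than the paper's, whose displayed computation drops the payoff function $(\cdot-K)^{+}$ and effectively compares only $\mathbb{E}(S(T))$, whereas you carry the monotone payoff through the coupling and justify the strictness of the inequality.
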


\begin{proof}
It is sufficient to see that $\mathbb{E}_{Q}(C(T))<\mathbb{E}_{P}(C(T)).$ We
have 
\begin{eqnarray*}
\mathbb{E}_{Q}(C(T)) &=&\mathbb{E}_{Q}(\exp (rT-\frac{1}{2}\sigma
^{2}T+\sigma W_{Q}(T)), \\
\mathbb{E}_{P}(C(T)) &=&\mathbb{E}_{P}(\exp (\mu T-\frac{1}{2}\sigma
^{2}T+\sigma W_{P}(T)),
\end{eqnarray*}%
but the distribution of $W_{Q}(T)$ under $Q$ is the same as the distribution
of $W_{P}(T)$ under $P,$ so%
\begin{equation*}
\mathbb{E}_{Q}(\exp (\sigma W_{Q}(T))=\mathbb{E}_{P}(\exp (\sigma W(T))
\end{equation*}%
which gives the claim.
\end{proof}

Now we look at this from the option writer's perspective. This amount $C_{%
\mathrm{BS}}$ obtained for the option is used to build a hedging portfolio $%
(x,y)$ with $x$ being the number of shares and $y$ the number of units of
the money market account. The risk free position is then $y=C_{\mathrm{BS}%
}-xS(0).$

Idealy, the hedging position should be continuously rebalanced but this is
not realistic. A practical hedging strategy could be piece-wise constant
with random rebalancing times. However, since the option price must be
agreed now, it is natural to consider first a constant strategy.

Assuming that the positions $x,y$ are kept constant over the time interval $%
[0,T],$ the terminal value of the portfolio is%
\begin{equation*}
V(T)=x[S(T)-S(0)e^{rT}]+C_{\mathrm{BS}}e^{rT}
\end{equation*}%
so the profit of the option writer $P_{\mathrm{W}}^{C}(T)=V(T)-C(T)$ is 
\begin{equation*}
P_{\mathrm{W}}(T)=x[S(T)-S(0)e^{rT}]+C_{\mathrm{BS}}e^{rT}-C(T).
\end{equation*}

For any $x$ the expected writer's profit computed with respect to the
risk-neutral probability is zero. Indeed, we have $\mathbb{E}%
_{Q}(S(T))=S(0)e^{rT}$ and $C_{\mathrm{BS}}=\mathbb{E}%
_{Q}(e^{-rT}(S(T)-K)^{+}),$ which gives the claim. But we use the physical
probability and since $\mathbb{E}_{P}(S(T))=S(0)e^{\mu T},$ the expected
option writer's profit is an increasing function of $x,$ provided $\mu >r,$
in fact it is a linear function with positive slope: 
\begin{equation*}
\mathbb{E}_{P}(P_{\mathrm{W}}(T))=xS(0)(e^{\mu T}-e^{rT})+C_{\mathrm{BS}%
}e^{rT}-\mathbb{E}_{P}(C(T)).
\end{equation*}

\begin{example}
\label{Ex;first}Assume $S(0)=100,$ $\mu =10\%,$ $\sigma =20\%,$ $r=5\%$ and
consider European call with strike $K=100,$ $T=1.$ The Black-Scholes price
is $C_{\mathrm{BS}}=10.45$ and using the formula $\mathbb{E}%
_{P}(C(T))=e^{\mu T}N(d_{+}^{\mu })-KN(d_{-}^{\mu })$ where $d_{\pm }^{\mu }$
is defines as $d_{\pm }^{r}$ with $\mu $ replacing $r,$ we find $\mathbb{E}%
_{P}(P_{\mathrm{H}}(T))=3.58.$ With delta hedging $x=N(d_{+}^{r})$ we have $%
\mathbb{E}_{P}(P_{\mathrm{W}}(T))=-0.25.$ However, it is writer's decision
to choose $x$ and if he tries to maximise the expected profit he may take
large $x,$ but this would be an active and risky position typical for an
investor, not an option writer, whose priority is hedging.
\end{example}

\section{Equilibrium pricing}

As we have seen, the price given by the Black-Scholes formula may not be
acceptable for the option writer. The option price of an OTC transaction has
to be determined by an agreement between both parties involved and we
propose some natural criteria. Recall that the expected profits of the
parties involved, the holder and the writer respectively, are%
\begin{eqnarray*}
\mathbb{E}_{P}(P_{\mathrm{H}}(T)) &=&\mathbb{E}_{P}(C(T))-Ce^{rT}, \\
\mathbb{E}_{P}(P_{\mathrm{W}}(T)) &=&xS(0)(e^{\mu T}-e^{rT})+Ce^{rT}-\mathbb{%
E}_{P}(C(T)).
\end{eqnarray*}

\begin{criterion}
(\textbf{Fair play) }The price $C$ should be such that $\mathbb{E}_{P}(P_{%
\mathrm{W}}(T))=\mathbb{E}_{P}(P_{\mathrm{H}}(T)).$
\end{criterion}

This condition gives us the price as a function of $x$: 
\begin{equation}
C_{x}=e^{-rT}(\mathbb{E}_{P}(C(T))-\frac{1}{2}xS(0)(e^{\mu T}-e^{rT})).
\label{eq:C(x)}
\end{equation}%
We need a mutually acceptable criterion for establishing $x,$ and it is
natural to assume that it will be based on minimising writer's risk. There
are many choices of a risk measure, like Value-at-Risk, or
Conditional-Value-at-Risk (CVaR) (see \cite{PTRM} for instance), but to use
them we have to agree on some confidence level, which is subjective. So we
propose to use a simple and natural idea of expected loss under the
condition that it is positive.

The writer's loss is given by 
\begin{equation}
L_{\mathrm{W}}(T)=-x(S(T)-S(0)e^{rT})-Ce^{rT}+C(T)  \label{eq:loss-writer}
\end{equation}%
and we introduce the conditional expectation of the loss, provided it is
positive:%
\begin{equation*}
\gamma _{\mathrm{W}}(x)=\mathbb{E}_{P}(L_{\mathrm{W}}(T)|L_{\mathrm{W}%
}(T)>0)=\frac{\mathbb{E}_{P}(L_{\mathrm{W}}(T)\mathbf{1}_{\{L_{\mathrm{W}%
}(T)>0\}})}{P(L_{\mathrm{W}}(T)>0)}
\end{equation*}%
and seek for the minimum. In fact, $\gamma _{\mathrm{W}}(x)$ is $\mathrm{CVaR%
}_{\alpha }(L_{\mathrm{W}}(T))$ at $\alpha =P(L_{\mathrm{W}}(T)>0).$

\begin{criterion}
(\textbf{Risk minimising) }The hedging position $x_{0}$ should be such that $%
\gamma _{W}(x)$ considered in the interval $[0,1]$ attains minimum at $x_{0}$
$.$
\end{criterion}

Note that the option writer decides the value of $x,$ and the number
resulting from this criterion is not binding.

In order to investigate the existence of the prices and hedging portfolios
satisfying the above criteria, we derive a closed form expression for the
function $\gamma _{W}(x).$

\begin{proposition}
\label{Prop:formofgamma}Assume $x\in \lbrack 0,1).$ Writer's risk is given by

\begin{enumerate}
\item 
\begin{equation*}
\gamma _{\mathrm{W}}(x)=\frac{\mathbb{E}(C(T)\mathbf{1}_{\{L_{\mathrm{W}%
}>0\}})}{P(L_{\mathrm{W}}(T)>0)}-x\frac{\mathbb{E}(S(T)\mathbf{1}_{\{L_{%
\mathrm{W}}>0\}})}{P(L_{\mathrm{W}}(T)>0)}+xS(0)e^{rT}-C_{x}e^{rT},
\end{equation*}%
where

\item $P(L_{\mathrm{W}}(T)>0)=N(d_{1})+1-N(d_{2}),$ with%
\begin{eqnarray*}
d_{1} &=&\frac{1}{\sigma \sqrt{T}}\left( \ln \frac{(xS(0)-C_{x})e^{rT}}{S(0)x%
}-\mu T+\frac{1}{2}\sigma ^{2}T\right) , \\
d_{2} &=&\frac{1}{\sigma \sqrt{T}}\left( \ln \frac{K+(C_{x}-xS(0))e^{rT}}{%
S(0)(1-x)}-\mu T+\frac{1}{2}\sigma ^{2}T\right) ,
\end{eqnarray*}

\item $\mathbb{E}(C(T)\mathbf{1}_{\{L_{\mathrm{W}}(T)>0\}})=S(0)e^{\mu
T}[1-N(d_{2}-\sigma \sqrt{T})]-K[1-N(d_{2})],$

\item $\mathbb{E}(S(T)\mathbf{1}_{\{L_{\mathrm{W}}(T)>0\}})=S(0)e^{\mu
T}[N(d_{1}-\sigma \sqrt{T})+1-N(d_{2}-\sigma \sqrt{T})].$
\end{enumerate}
\end{proposition}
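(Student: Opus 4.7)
The plan is to reduce the whole statement to a single geometric fact about how the event $\{L_W(T)>0\}$ sits as a set of values of $S(T)$, and then apply two standard lognormal integrals with drift $\mu$. Item 1 is then an algebraic rearrangement: from (\ref{eq:loss-writer}) I would write
\[ L_W(T) = C(T) - xS(T) + (xS(0) - C_x)e^{rT}, \]
take expectation on $\{L_W>0\}$, divide by $P(L_W>0)$, and pull the two deterministic terms through the indicator so that the denominator cancels against them.

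For items 2--4 I would first pin down the event $\{L_W(T)>0\}$ as a function of $S(T)$. Write $L_W(T)=f(S(T))$ with $f(s)=(s-K)^+ - xs -(C_x-xS(0))e^{rT}$; for $x\in[0,1)$ this is piecewise linear with slope $-x$ on $(-\infty,K]$ and slope $1-x$ on $[K,\infty)$, hence a V-shape with vertex at $s=K$. Combining (\ref{eq:C(x)}), the hypothesis $\mu>r$, and the elementary bound $\mathbb{E}_P(C(T))\geq S(0)e^{\mu T}-K$, I would check $f(K)\leq 0$, so $f$ has two roots
\[ A_1 = \frac{(xS(0)-C_x)e^{rT}}{x} \leq K, \qquad A_2 = \frac{K+(C_x-xS(0))e^{rT}}{1-x} \geq K, \]
and consequently $\{L_W>0\}=\{S(T)<A_1\}\cup\{S(T)>A_2\}$ disjointly. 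Standardising $\ln(S(T)/S(0))\sim N((\mu-\sigma^2/2)T,\sigma^2T)$ and reading off $\ln A_j/S(0)$ produces exactly the $d_1,d_2$ in the statement; adding $P(S(T)<A_1)=N(d_1)$ and $P(S(T)>A_2)=1-N(d_2)$ proves item 2.

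Items 3 and 4 are then routine. A complete-the-square calculation gives the identity
\[ \mathbb{E}\bigl(S(T)\mathbf{1}_{\{S(T)<A\}}\bigr)=S(0)e^{\mu T}N(\tilde d-\sigma\sqrt{T}) \]
whenever $P(S(T)<A)=N(\tilde d)$, and analogously for the right tail. Applying this to $A_1$ and $A_2$ and summing yields item 4. For item 3 I would use that $A_1\leq K\leq A_2$ forces $C(T)=0$ on $\{S(T)<A_1\}$ and $C(T)=S(T)-K$ on $\{S(T)>A_2\}$, so $\mathbb{E}(C(T)\mathbf{1}_{\{L_W>0\}})$ equals the right-tail part of item 4 minus $K\,(1-N(d_2))$, which matches the claimed formula.

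The main obstacle is the V-shape step, and within it the verification that $f(K)\leq 0$; without this inequality the event $\{L_W>0\}$ fails to split into the two lognormal tails that generate $d_1$ and $d_2$, and items 2--4 would have to be reformulated. Beyond that single check, the argument is pure Gaussian integration and algebraic bookkeeping.
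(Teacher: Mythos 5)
Your argument is correct, and its computational core (the rearrangement in item 1, the standardisation of $\ln S(T)$ giving $d_1,d_2$, and the complete-the-square lognormal integrals for items 3 and 4) coincides with the paper's proof. Where you genuinely differ is in how the crucial structural fact is obtained, namely that the event $\{L_{\mathrm{W}}(T)>0\}$ is the union of two tails straddling $K$. The paper splits into the cases $S(T)<K$ and $S(T)>K$ and then invokes a separate Lemma, $d_1<d<d_2$, proved in the appendix by evaluating the two relevant expressions at the endpoints $x=1$ and $x=0$ and showing each is monotone in $x$. You instead view $L_{\mathrm{W}}(T)=f(S(T))$ with $f$ piecewise linear of slopes $-x$ and $1-x$, so that the entire Lemma collapses to the single vertex check $f(K)\leq 0$; your verification of that check via
\begin{equation*}
\tfrac{1}{2}xS(0)(e^{\mu T}+e^{rT})-xK< x\bigl(S(0)e^{\mu T}-K\bigr)\leq \mathbb{E}_P(C(T))
\end{equation*}
uses exactly the same ingredients ($\mu>r$, $\mathbb{E}_P(C(T))\geq \mathbb{E}_P(S(T))-K$, i.e.\ call--put parity) but in one inequality rather than two monotonicity arguments plus endpoint evaluations, and it simultaneously delivers $A_1\leq K$ and $A_2\geq K$. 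This is a cleaner and more self-contained route to the Lemma; one small point to spell out is the degenerate subcase $xS(0)\leq C_x$ (in particular $x=0$), where $A_1\leq 0$, the left tail is empty and $d_1$ is not defined by the stated formula, an issue the paper's formulation shares. You should also note that $\mathbb{E}_P(C(T))>0$ is needed to close the case $S(0)e^{\mu T}<K$, which is immediate.
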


The following relation will be used in the proof.

\begin{lemma}
\label{lemma}We have $d_{1}<d<d_{2},$ where 
\begin{equation*}
d=\frac{1}{\sigma \sqrt{T}}\left( \ln \frac{K}{S(0)}-\mu T+\frac{1}{2}\sigma
^{2}T\right) .
\end{equation*}
\end{lemma}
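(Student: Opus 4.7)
The plan is to notice that $d_1$, $d$, and $d_2$ share the common form $\tfrac{1}{\sigma\sqrt{T}}(\ln(\cdot) - \mu T + \tfrac{1}{2}\sigma^{2}T)$, so the chain $d_1 < d < d_2$ is equivalent to the corresponding chain of inequalities between the three arguments of the logarithms. A routine clearing of denominators shows that both $d_1 < d$ and $d < d_2$ reduce to the single algebraic condition
$$C_x e^{rT} > x\bigl(S(0)e^{rT} - K\bigr).$$
Indeed, $d_1 < d$ reads $(xS(0)-C_x)e^{rT}/(S(0)x) < K/S(0)$, while $d < d_2$ reads $K(1-x) < K + (C_x - xS(0))e^{rT}$; both rearrange to the displayed inequality.

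Next I would substitute the formula $(\ref{eq:C(x)})$ for $C_x$, which gives $C_x e^{rT} = \mathbb{E}_P(C(T)) - \tfrac{1}{2}xS(0)(e^{\mu T} - e^{rT})$, so that the target condition becomes
$$\mathbb{E}_P(C(T)) > \tfrac{x}{2}S(0)(e^{\mu T} + e^{rT}) - xK.$$

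To close the argument I would invoke two elementary bounds on $\mathbb{E}_P(C(T))$: first, $\mathbb{E}_P(C(T)) > 0$ (since $S(T)$ has full support on $(0,\infty)$), and second, $\mathbb{E}_P(C(T)) \geq \mathbb{E}_P(S(T)-K) = S(0)e^{\mu T} - K$ (from $(s-K)^+ \geq s-K$). Writing $A := \tfrac{1}{2}S(0)(e^{\mu T} + e^{rT}) - K$, I would split on the sign of $A$. If $A \leq 0$, then $xA \leq 0 < \mathbb{E}_P(C(T))$. If $A > 0$, then the assumption $\mu > r$ gives $e^{\mu T} > \tfrac{1}{2}(e^{\mu T} + e^{rT})$, hence $A < S(0)e^{\mu T} - K \leq \mathbb{E}_P(C(T))$; combining with $x < 1$ yields $xA \leq A < \mathbb{E}_P(C(T))$, as required.

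The main obstacle is mild: it is the bookkeeping needed to see that the two apparently distinct inequalities $d_1 < d$ and $d < d_2$ really do collapse to the same algebraic condition, and to recognise that the discount term $\tfrac{1}{2}xS(0)(e^{\mu T} - e^{rT})$ built into $C_x$ is exactly what symmetrises $S(0)e^{\mu T}$ into $\tfrac{1}{2}S(0)(e^{\mu T}+e^{rT})$, thereby making room for the slack produced by $\mu > r$ to be used a single time.
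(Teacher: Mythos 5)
Your proof is correct, and it takes a genuinely different route from the paper's. The paper handles the two inequalities separately: it verifies $d_{1}<d$ at the endpoint $x=1$ (using call--put parity to bound $\mathbb{E}_{P}(C(T))$ from below) and $d<d_{2}$ at $x=0$, and then extends each to the whole interval by showing the relevant function of $x$ is monotone increasing, a step it leaves as ``some algebra.'' You instead observe that both inequalities clear to the \emph{same} condition $C_{x}e^{rT}>x(S(0)e^{rT}-K)$, substitute the definition of $C_{x}$ to turn this into $\mathbb{E}_{P}(C(T))>\tfrac{x}{2}S(0)(e^{\mu T}+e^{rT})-xK$, and dispose of it for all $x\in[0,1)$ at once via the two elementary bounds $\mathbb{E}_{P}(C(T))>0$ and $\mathbb{E}_{P}(C(T))\geq S(0)e^{\mu T}-K$, the latter combined with $\mu>r$ playing exactly the role that call--put parity plays in the paper (it is the same fact, since $\mathbb{E}_{P}(P(T))\geq 0$). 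Your version buys a cleaner argument: no monotonicity computation is needed, the symmetry between the two inequalities is made explicit, and the single place where $\mu>r$ enters is isolated. One shared caveat, present in the paper as well: for small $x$ the argument of the logarithm in $d_{1}$ is negative (and at $x=0$ one divides by zero), so $d_{1}$ must be read as $-\infty$ there; your algebraic reduction remains valid under that convention since multiplying through by $S(0)x>0$ does not require the left-hand side to be positive.
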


The proof is routine and a sketch is given in the appendix. It is based on
the inequality $\mu >r$ and employs call-put parity.

\begin{proof}[Proof of Proposition \protect\ref{Prop:formofgamma}]
1. To find the form of $\gamma _{\mathrm{W}}(x)$ all we have to if to use
the expression (\ref{eq:loss-writer}) for $L_{\mathrm{W}}(T)$ with $C=C_{x}$%
, insert it to the definition of conditional probability, and use the
linearity of expectation.

2. Write $S(T)=S(0)e^{\mu T-\frac{1}{2}\sigma ^{2}T+\sigma \sqrt{T}Z}$ where 
$Z\sim N(0,1).$

First note that $L_{\mathrm{W}}(T)>0$ is equivalent to $%
x(S(T)-S(0)e^{rT})+C_{x}e^{rT}-C(T)<0$ and we consider two cases relevant
for the form of $C(T).$

Case 1. $S(T)<K,$ that is $Z<d,$ and here $C(T)=0$ so that 
\begin{equation*}
L_{\mathrm{W}}(T)=x(S(T)-S(0)e^{rT})+C_{x}e^{rT}.
\end{equation*}%
Next, $L_{\mathrm{W}}(T)>0$ is now equivalent to $S(T)<\frac{1}{x}%
(xS(0)-C_{x})e^{rT},$ which in turn corresponds to $Z<d_{1}$. As a result we
get the probability $N(d_{1}).$

Case 2. $S(T)>K$ that is $Z>d,$ and here $C(T)=S(T)-K.$ Now, simple algebra
shows that $L_{\mathrm{W}}(T)>0$ if and only if $Z>d_{2}$ with probability $%
1-N(d_{2}),$ and these two cases give%
\begin{equation*}
L_{\mathrm{W}}(T)>0\quad \text{iff\quad }Z\leq d_{1}\text{ or }Z>d_{2},
\end{equation*}%
disjoint events, hence the claim.

3. For $\mathbb{E}(C(T)\mathbf{1}_{\{L_{\mathrm{W}}(T)>0\}})$ note that $%
C(T)=(S(0)e^{\mu T-\frac{1}{2}\sigma ^{2}T+\sigma \sqrt{T}Z}-K)\mathbf{1}%
_{\{Z>d\}}$ while $\mathbf{1}_{\{L_{\mathrm{W}}(T)>0\}}=\mathbf{1}%
_{\{Z>d_{2}\}\cup \{Z\leq d_{1}\}}$ so we have (using the lemma)%
\begin{equation*}
\mathbb{E}(C(T)\mathbf{1}_{\{L_{\mathrm{W}}(T)>0\}})=\mathbb{E}((S(0)e^{\mu
T-\frac{1}{2}\sigma ^{2}T+\sigma \sqrt{T}Z}-K)\mathbf{1}_{\{Z>d_{2}\}}).
\end{equation*}%
Routine integration gives 
\begin{equation}
\frac{1}{\sqrt{2\pi }}\int_{d_{2}}^{\infty }e^{\sigma \sqrt{T}z-\frac{1}{2}%
\sigma ^{2}T}e^{-\frac{1}{2}z^{2}}dz=[1-N(d_{2}-\sigma \sqrt{T})]
\label{eq:integral}
\end{equation}%
which readily imples the claim.

4. For stock prices we have $\mathbb{E}(S(T)\mathbf{1}_{\{L_{\mathrm{W}%
}(T)>0\}})=\mathbb{E}(S(T)\mathbf{1}_{\{Z>d_{2}\}\cup \{Z\leq d_{1}\}})$ and
all that is left to compute two integrals similar to (\ref{eq:integral}).
\end{proof}

Option holder may wish to find the risk resulting from the option writer's
decision and the form of the risk function will be needed. Recall that
holder's loss is $L_{\mathrm{H}}(T)=C_{x}e^{rT}-C(T)$ and the risk is
similarly assumed to be the conditional expectation $\gamma _{\mathrm{H}}(x)=%
\mathbb{E}(L_{\mathrm{H}}(T)|L_{\mathrm{H}}(T)>0).$

\begin{proposition}
We have 
\begin{equation*}
\gamma _{\mathrm{H}}(x)=C_{x}e^{rT}-\frac{S(0)e^{\mu T}[N(d^{\prime }-\sigma 
\sqrt{T})-N(d-\sigma \sqrt{T})]-K[N(d^{\prime })-N(d)]}{N(d^{\prime })},
\end{equation*}%
where%
\begin{equation*}
d^{\prime }=\frac{1}{\sigma \sqrt{T}}\left( \ln \frac{K+C_{x}e^{rT}}{S(0)}%
-\mu T+\frac{1}{2}\sigma ^{2}T\right) 
\end{equation*}
\end{proposition}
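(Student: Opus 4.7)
The plan is to mirror the strategy used for $\gamma_{\mathrm{W}}$ in Proposition \ref{Prop:formofgamma}, which is considerably easier here because the holder carries no hedging portfolio. I would begin by writing $S(T) = S(0)\exp(\mu T - \tfrac12\sigma^2 T + \sigma\sqrt{T}\,Z)$ with $Z\sim N(0,1)$ and determining the event $\{L_{\mathrm{H}}(T)>0\}$. Since $L_{\mathrm{H}}(T) = C_x e^{rT} - (S(T)-K)^+$, the loss is automatically positive on $\{S(T)\leq K\}$ (assuming $C_x>0$), while on $\{S(T)>K\}$ positivity is equivalent to $S(T) < K + C_x e^{rT}$. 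These two conditions collapse to the single inequality $S(T) < K + C_x e^{rT}$, i.e.\ $Z < d'$, from which $P(L_{\mathrm{H}}(T)>0) = N(d')$ is immediate.

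Next I would split the numerator
\begin{equation*}
\mathbb{E}(L_{\mathrm{H}}(T)\mathbf{1}_{\{L_{\mathrm{H}}(T)>0\}}) = C_x e^{rT} N(d') - \mathbb{E}((S(T)-K)^+ \mathbf{1}_{\{Z<d'\}}).
\end{equation*}
The payoff indicator $\mathbf{1}_{\{(S(T)-K)^+>0\}} = \mathbf{1}_{\{Z>d\}}$ intersects $\{Z<d'\}$ in the interval $\{d < Z < d'\}$, provided $d < d'$. This inequality is easy: $d < d'$ is equivalent to $K < K + C_x e^{rT}$, which holds whenever $C_x>0$, and is simpler than the $d_1<d<d_2$ ordering of Lemma \ref{lemma}, so no separate lemma is needed here.

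The remaining computation is the same Gaussian integration that yielded (\ref{eq:integral}), applied on a bounded interval rather than a half-line. This gives
\begin{equation*}
\mathbb{E}(S(T)\mathbf{1}_{\{d<Z<d'\}}) = S(0)e^{\mu T}\bigl[N(d'-\sigma\sqrt{T}) - N(d-\sigma\sqrt{T})\bigr],
\end{equation*}
together with $\mathbb{E}(K\mathbf{1}_{\{d<Z<d'\}}) = K[N(d')-N(d)]$. Substituting and dividing by $N(d')$ produces the stated expression for $\gamma_{\mathrm{H}}(x)$.

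There is no serious obstacle: the only mildly delicate step is checking that the two sub-cases of the loss glue into the single event $\{Z<d'\}$ and that $d<d'$, so that no extra terms arise from the region $\{Z\leq d\}\cap\{Z<d'\}$ where the payoff vanishes. Everything else is the same normal-density calculation already performed in the proof of Proposition \ref{Prop:formofgamma}.
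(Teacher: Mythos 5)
Your proposal is correct and follows essentially the same route as the paper: identify $\{L_{\mathrm{H}}(T)>0\}=\{Z<d'\}$, split off the payoff term over $\{d<Z<d'\}$, and evaluate the Gaussian integral as in the proof of Proposition \ref{Prop:formofgamma}. Your explicit check that the two sub-cases glue into the single event $\{Z<d'\}$ and that $d<d'$ follows from $C_x>0$ is slightly more careful than the paper's parenthetical remark, but the argument is the same.
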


\begin{proof}
Holder's loss is positive if $S(T)<C_{x}e^{rT}+K$ which corresponds to $%
Z<d^{\prime }$ (note that $d<d^{\prime })$ and gives the probability of the
condition. Next 
\begin{eqnarray*}
\mathbb{E}(L_{\mathrm{H}}(T)\mathbf{1}_{\{Z<d^{\prime }\}})
&=&C_{x}e^{rT}P(Z<d^{\prime })-\mathbb{E}((S(T)-K)\mathbf{1}_{\{Z<d^{\prime
}\}}\mathbf{1}_{\{Z>d\}}) \\
&=&C_{x}e^{rT}(Z<d^{\prime })-\mathbb{E}(S(T)\mathbf{1}_{\{d<Z<d^{\prime
}\}})+K[N(d^{\prime })-N(d)]
\end{eqnarray*}%
and we can compute the expectation in a similar way as before.
\end{proof}

With explicit form of $\gamma _{\mathrm{W}}(x)$ we could try to prove the
existence of the minimum in $[0,1].$ However, the form of the derivative is
quite complicated and solving the necessary condition would have to be
performed numerically, so it is best to restrict our attention to an example
where the minimum is found numerically.

\begin{example}
With the data as in Example \ref{Ex;first} we find the form of the risk
function and sketch the graph. The witer's risk is smallest for $x=0.7212$
and the corresponding option price is $C_{x}=12.10,$ higher than the
Black-Scholes price.%

\begin{figure}[h]
\centering{\includegraphics[width=4in]{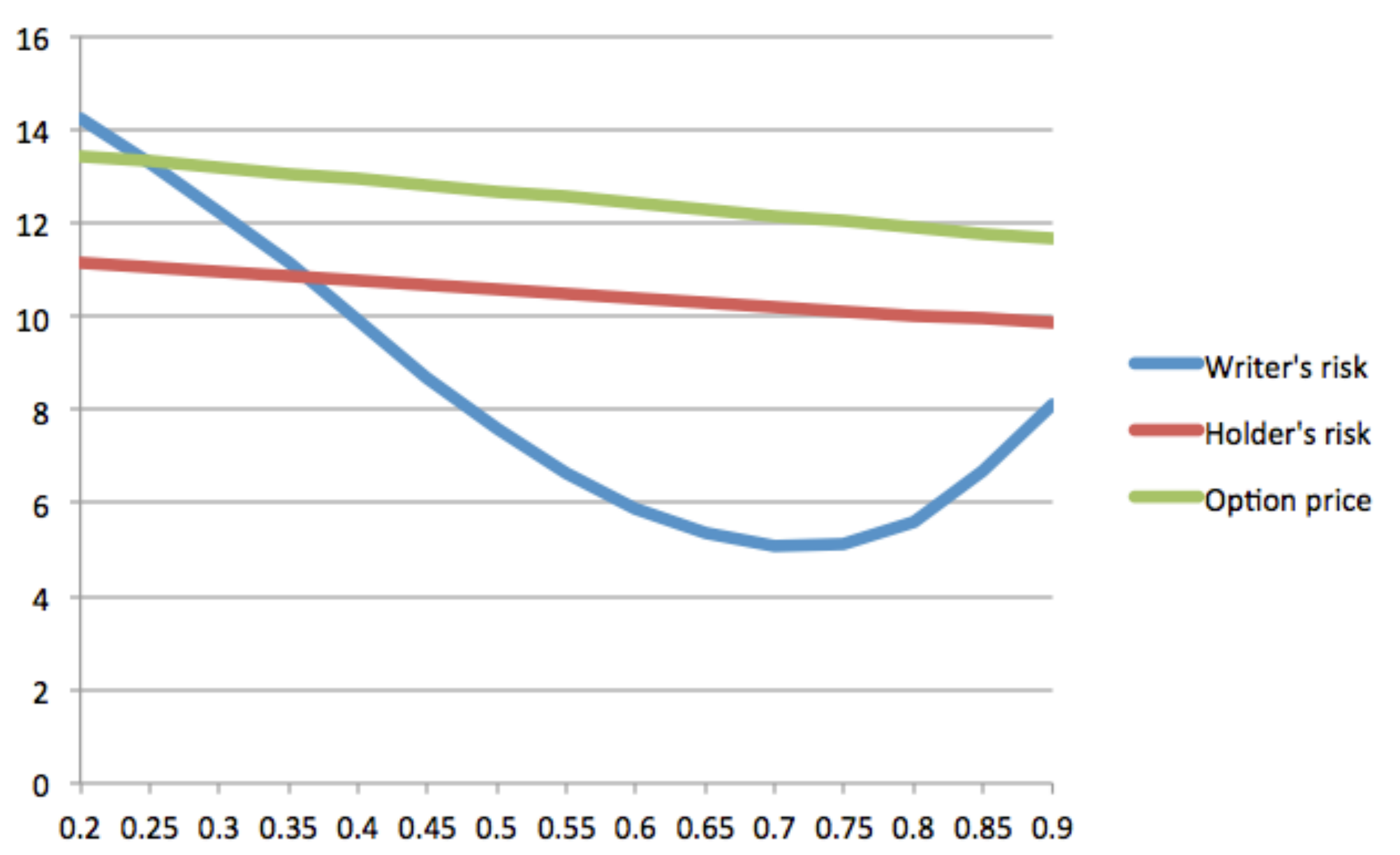}}\label
{risk.pdf}
{\caption{Risk functions}}
\end{figure}
\end{example}

Finally, we repeat this for various strike prices and compute the implied
volatility using the Black-Scholes formula.

\begin{example}
With the same data as before we get the following results (see Figure 2).%
\begin{equation*}
\begin{array}{ccccccc}
K & 90 & 95 & 100 & 105 & 110 & 115 \\ 
C & 18.89 & 15.28 & 12.10 & 9.38 & 7.12 & 5.30 \\ 
\sigma & 27.43\% & 25.67\% & 24.38\% & 23.42\% & 22.72\% & 22.20\%%
\end{array}%
\end{equation*}

\begin{figure}[h]
\centering{\includegraphics[width=3in]{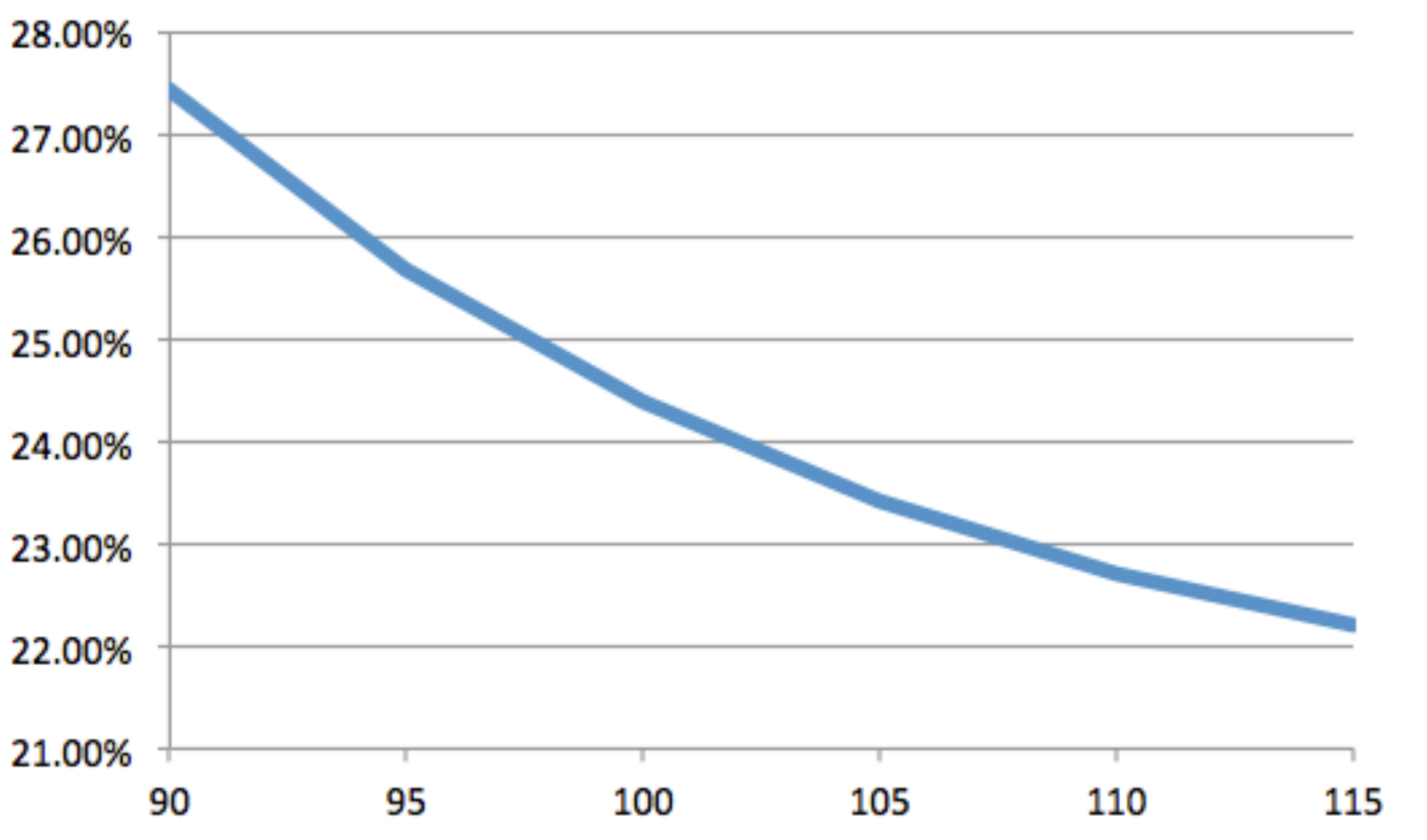}}\label
{smile.pdf}
{\caption{Volatility smile}}
\end{figure}%
\end{example}

\begin{remark}
The option writer may wish to rebalance the strategy in the future and if
the same risk measure is accepted, all that is needed is to use the
expressions derived above for the relevant time period, replacing $T$ by $%
T-t $, where $t$ is the time of rebalancing.
\end{remark}

\section{Appendix}

\begin{proof}[Proof of Lemma \protect\ref{lemma}]
We have to see that%
\begin{equation*}
\frac{(xS(0)-C_{x})e^{rT}}{S(0)x}<\frac{K}{S(0)}<\frac{K+(C_{x}-xS(0))e^{rT}%
}{S(0)(1-x)}.
\end{equation*}%
First we demostrate that the first inequality holds for $x=1.$ Using the
definition of $C_{x}$ we have to show that%
\begin{equation*}
S(0)e^{rT}-\mathbb{E}_{P}(C(T))+\frac{1}{2}S(0)(e^{\mu T}-e^{rT})<K.
\end{equation*}%
We use the call-put payoffs parity $C(T)=P(T)+S(T)-K,$ and insert this above
on the left to get $K-\mathbb{E}_{P}(P(T))-\frac{1}{2}S(0)(e^{\mu T}-e^{rT}),
$ clearly smaller than $K.$ The claim follows from the fact that $\frac{%
(xS(0)-C_{x})e^{rT}}{S(0)x}$ is increasing with $x$ which follows after
inserting the definition of $C_{x}$ and perfoming some algebra: 
\begin{equation*}
\frac{(xS(0)-C_{x})e^{rT}}{S(0)x}=e^{rT}-\frac{\mathbb{E}(C(T))}{S(0)x}+%
\frac{1}{2}(e^{\mu T}-e^{rT}).
\end{equation*}

For the second inequality note that it is obviously true for $x=0$ with $%
C_{0}=e^{-rT}(\mathbb{E}(C(T))>0.$ It is sufficient to see that the function 
$\frac{K+(C_{x}-xS(0))e^{rT}}{S(0)(1-x)}$ is an incresing function of $x.$
To get this we compute its derivative and some algebra shows it is positive.
\end{proof}

\end{document}